\documentclass[preprint,12pt]{elsarticle}

\usepackage{amsmath,amssymb,amsfonts}
\usepackage{amsthm}
\usepackage{array}
\usepackage{booktabs}
\usepackage{multirow}
\usepackage{tabularx}
\usepackage{makecell}
\usepackage{graphicx}
\usepackage{flafter}
\usepackage{float}
\usepackage{enumitem}
\usepackage{url}
\usepackage{xurl}
\usepackage[hidelinks]{hyperref}
\hypersetup{
  pdftitle={Do We Still Need Demand Flexibility as Batteries Become Cheaper? A Levelized Cost Perspective},
  pdfauthor={Ruike Lyu and Tengmu Li}
}

\graphicspath{{./figures/}}
\journal{Energy Conversion and Management}
\biboptions{sort&compress}

\newtheorem{proposition}{Proposition}
\theoremstyle{remark}

\begin{document}

\begin{frontmatter}

\title{Do We Still Need Demand Flexibility as Batteries Become Cheaper? A Levelized Cost Perspective}

\author[princeton]{Ruike Lyu}
\author[tju]{Tengmu Li}
\address[princeton]{Andlinger Center for Energy and the Environment and Department of Mechanical and Aerospace Engineering, Princeton University, Princeton, NJ 08544, USA}
\address[tju]{School of Electrical and Information Engineering, Tianjin University, Nankai District, Tianjin 300072, China}

\begin{abstract}
Energy storage and flexible loads both help balance power systems with high shares of variable renewables, but they do so differently. A battery moves electricity from one period to another. A factory or data center instead moves production or computing activity, while meeting demand for its product or service. As batteries become cheaper, it is natural to ask whether demand flexibility is still needed. Many assessments treat that controllable load as already available and focus on enabling or participation costs. For a large load that normally runs near full capacity, however, shifting electricity use can require extra equipment, delay sales, tie up working capital, and disrupt the process. Here, we compare energy storage technologies with industrial and data-center load shifting on a levelized cost basis, measuring the incremental cost of flexible relative to baseline operation. We find that energy storage becomes more expensive per shifted kilowatt-hour as the flexibility period lengthens and assets are cycled less often. Load shifting does not share that rise in capacity cost under a fixed avoided-hour share, but new productive capacity can still be too expensive to justify. Under representative central technology costs, batteries compete well for short-cycle shifting but are more than 80 times more expensive than load shifting of aluminum smelting when providing seasonal flexibility. The answer therefore depends on the timescale, the load, and whether the needed capacity is new or already sunk.
\end{abstract}

\begin{keyword}
Demand flexibility \sep Energy storage \sep Industrial loads \sep Load shifting \sep Levelized cost \sep Working capital
\end{keyword}

\end{frontmatter}

\section*{Nomenclature}
\addcontentsline{toc}{section}{Nomenclature}
\begin{description}[style=multiline,leftmargin=3.5cm,labelwidth=3.2cm]
\item[$T$] Flexibility period, or complete shifting-cycle window (h).
\item[$H$] High-price hours shifted, avoided, or discharged within $T$.
\item[$W$] Price window used to calculate system value.
\item[$N_{\mathrm{cycle}}$] Number of cycles per year, $8760/T$ (yr$^{-1}$).
\item[$\Delta t$] High-price operating time avoided within one period (h).
\item[$e_1$] Energy shifted or supplied during high-price hours in one cycle (kWh).
\item[$E_{\mathrm{shift}}$] Shifted energy within one window; the modifier $\mathrm{yr}$ denotes the corresponding annual quantity (kWh and kWh/yr).
\item[$E_{\mathrm{movable}}$] Maximum energy movable without new capacity (kWh or kWh/yr).
\item[$p_0$] Baseline load capability; $p_{\mathrm{flex}}$ is flexible power available for shifting (kW).
\item[$\delta p$] Added load capability within one period; $\Delta p$ is the corresponding added power used in annualized capacity-cost accounting (kW).
\item[$r$] Actual shifted share; $\rho$ is the avoidable high-price share (dimensionless).

\item[$I$] Original one-time capital expenditure (Chinese yuan, CNY, per corresponding capacity unit).
\item[$y$] Asset or process lifetime (yr).
\item[$r_{\mathrm{int}}$] Rate used for capital recovery; $r_{\mathrm{discount}}$ is the inventory-financing rate (yr$^{-1}$).
\item[$\mathrm{CRF}$] Capital recovery factor (yr$^{-1}$).
\item[$f_{\mathrm{FOM}}$] Fixed O\&M as a fraction of investment; $\mathrm{FOM}_{\mathrm{abs}}$ is the absolute annual cost per unit of capacity.
\item[$K_x$] Equivalent annualized capacity cost for component or capacity basis $x$; the relevant unit is CNY/(capacity unit yr).

\item[$\lambda$] Marginal electricity price; a bar denotes an average over the hours identified by the subscripts (CNY/kWh).
\item[$V_W$] Peak-hour average marginal value in one window; $V_{\mathrm{system}}$ is its average across all complete windows (CNY/kWh shifted).
\item[$c_x$] Unit shifted-energy cost of resource or cost component $x$ (CNY/kWh shifted).
\item[$C_{x,\mathrm{yr}}$] Annual total cost of resource $x$ (CNY/yr).
\item[$\eta_x$] Efficiency or utilization factor of component $x$ (dimensionless).

\item[$E_{\mathrm{bat}}$] Battery energy capacity; $p_{\mathrm{bat}}$ and $E_{\mathrm{charge}}$ are the battery power capacity and charging electricity required to supply $e_1$ (kWh, kW, and kWh).
\item[$k_{\mathrm{ely}}$] Electrolyzer power; $k_{\mathrm{fc}}$ and $e_{\mathrm{h2}}$ are the fuel-cell power and hydrogen energy capacity required to supply $e_1$ (kW, kW, and kWh$_{\mathrm{H2}}$).
\item[$e_{\mathrm{intensity}}$] Electricity consumption per unit of product (kWh/t).
\item[$\lambda_{\mathrm{product}}$] Product price or inventory value (CNY/t).
\item[$\tau_{\mathrm{wc}}$] Inventory cash lock-up duration (yr).
\item[$\mathrm{Inv}(t)$] Physical-product inventory at time $t$ (t).
\item[$\mathrm{Production}$] Production time series; $\mathrm{Demand}$ is the product-demand time series (t per modeled interval).
\item[$\mathrm{Area}(t)$] Warehousing footprint at time $t$ (m$^2$).
\item[$\rho_{\mathrm{material}}$] Material stacking density; $h_{\mathrm{effective}}$ and $\eta_{\mathrm{storage}}$ are effective stacking height and warehouse-space utilization (t/m$^3$, m, and dimensionless).
\item[$S_{\mathrm{restart}}$] Cost of one shutdown, restart, or recovery event per affected load capacity (CNY/kW/event).
\item[$C$] Capacity required for baseline demand; $C_{\mathrm{available}}$ is total available capacity.
\item[$D$] Fixed demand, written as $D=CT$.
\item[$r_C$] Existing overcapacity ratio, $(C_{\mathrm{available}}-C)/C_{\mathrm{available}}$; under the definition used here, $\rho=r_C$.
\item[$T_{\mathrm{run}}$] Runtime required to satisfy demand (h).
\item[$v_e$] Product value per unit of electricity consumption, $\lambda_{\mathrm{product}}/e_{\mathrm{intensity}}$ (CNY/kWh).
\end{description}

\paragraph{Subscripts and modifiers.}
\begin{description}[style=multiline,leftmargin=3.5cm,labelwidth=3.2cm]
\item[$\mathrm{battery}$] Battery storage; $\mathrm{h2}$ denotes power-to-hydrogen-to-power storage.
\item[$\mathrm{ely}$] Electrolyzer; $\mathrm{fc}$ and $\mathrm{store}$ denote the fuel cell and underground hydrogen store.
\item[$E$, $P$] Battery energy and power capacity bases when attached to $K$; $C$ and $\mathrm{IT}$ denote industrial-load and data-center capacity bases.
\item[$\mathrm{charge}$] Low-price charging; $\mathrm{run}$ denotes make-up production or computing operation.
\item[$\mathrm{top}$] Selected highest-price hours; $\mathrm{bottom}$, $\mathrm{high}$, and $\mathrm{low}$ denote lowest-price, high-price, and low-price hours. $H$ and $W$ give the selection length and window.
\item[$\mathrm{cap}$] Added-capacity cost; $\mathrm{wc}$, $\mathrm{wh}$, and $\mathrm{restart}$ denote working-capital, warehousing, and process-disruption costs.
\item[$\mathrm{no\ excess}$] Baseline production or computing capacity is fully used, so added capacity is required.
\item[$\mathrm{sunk}$] Idle capacity already exists and its fixed investment is treated as sunk.
\item[$\mathrm{dc}$] Data-center case; $\mathrm{Al}$ and $\mathrm{steel}$ denote the two industrial cases.
\item[$\mathrm{yr}$] Annual quantity; $\mathrm{rt}$ denotes round-trip efficiency.
\end{description}

\section{Introduction}\label{sec:introduction}

Power systems with high shares of variable renewable generation need resources that can balance supply and demand across multiple timescales~\cite{kondziella2016flexibility,denholm_challenges_2021,cole_quantifying_2021}. Energy storage and demand flexibility\footnote{We use \emph{demand flexibility} for the broad consumer-side capability of adjusting electricity use across timescales. \emph{Demand response} is reserved for event-based or market programs as used in the literature we cite. The analysis itself focuses on \emph{load shifting} with fixed total output, not curtailment without later recovery.} address this problem in different ways. Energy storage moves electricity from one period to another. A factory or data center instead moves production or computing activity, while still meeting demand for its product or service. As batteries become cheaper, competition between energy storage and demand flexibility raises a practical question. Which flexible loads can still compete with energy storage, and over what timescales?

Reviews and planning studies have documented substantial technical potential from industrial demand flexibility and have developed ways to represent demand response capabilities and willingness~\cite{golmohamadi_demand-side_2022,wei_generation_2025}. Most discussions, however, implicitly treat the underlying flexible capacity as already available. They therefore focus on the additional costs and barriers to eliciting a response, including weak incentives, limited market access, inadequate metering, automation costs, and a lack of trust~\cite{iea2025_value_demand_flexibility,iea2026_scaling_up_demand_flexibility}. Recent levelized-cost work similarly compares energy storage with direct load-control schemes such as vehicle-to-grid, smart charging, and heat-pump control by accounting for enabling equipment, consumer participation payments, and resource availability~\cite{thran_levelised_2026}. This perspective is appropriate when controllable or redundant capacity already exists, but it does not capture the capital cost of creating flexibility in a large industrial or computing load that normally operates near full capacity. Such a facility may need additional production equipment, servers, and supporting infrastructure to reduce electricity use in one period and recover output later. It may also delay sales, tie up working capital in inventory, and incur warehousing or process-disruption costs. Here, we argue that the cost of flexibility depends first on whether the required productive capacity must be newly built or is already available as sunk excess capacity.

In this article, building on the levelized-cost comparison of energy storage and flexible demand~\cite{brandt2021_blow_wind_blow,thran_levelised_2026}, we complete the cost picture for large industrial and computing loads. Brandt et al.~\cite{brandt2021_blow_wind_blow} provide the levelization idea of dividing annualized costs by electricity supplied or avoided during high-price hours. Thr{\"a}n et al.~\cite{thran_levelised_2026} extend that perspective to demand response through a levelised cost of demand response that accounts for enabling and participation costs when end-use capacity already exists. For factories and data centers that normally run near full capacity, however, the missing pieces are the incremental costs of flexible relative to baseline operation, including productive-capacity investment, working-capital financing, warehousing, and process disruption. We derive levelized cost of peak energy (LCPE) expressions for battery and hydrogen storage and for load shifting by industrial facilities and data centers, distinguishing capacity built for flexible operation from excess capacity that already exists and is treated as sunk.

Through analytical scaling relations, we show that the costs of energy storage and demand flexibility change differently with the flexibility period. Energy-storage capacity serves less annual shifted energy as the number of cycles falls, so its levelized capacity cost rises roughly in proportion to the length of the flexibility period. For load shifting under a fixed avoided-hour share, the added-capacity term is approximately independent of that period after levelization, whereas working-capital cost grows with the period when production and sale are separated for longer, and event-based restart cost is diluted as the shifted duration within each event increases.

Using representative technology costs, we then illustrate these relations numerically. That the capacity term does not rise with the flexibility period does not make new load-side capacity inexpensive. Production equipment, servers, and supporting infrastructure can still require investments that typical electricity-cost savings do not recover. Batteries therefore tend to compare favorably with frequently cycled, short-duration load shifting when new production or computing capacity would otherwise be required. Demand flexibility can remain competitive at longer durations when it can use existing excess capacity and when inventory, delay, and process-disruption costs remain limited. The comparison between batteries and demand flexibility must therefore be made for a specified flexibility period, price pattern, capacity state, and production or service process.

The remainder of this paper is organised as follows. Section~\ref{sec:observations} sets out the observations that motivate the comparison. Section~\ref{sec:technical_methods} formulates the levelized cost and the incremental cost components used here. Sections~\ref{sec:storage_lcpe} and~\ref{sec:flexible_production} then derive the energy-storage and load-shifting cost formulations, respectively. Section~\ref{sec:case_study} presents the case-study inputs and discusses the numerical comparison across flexibility timescales.

\section{Observations in recent demand response practice and research}\label{sec:observations}

\begin{table}[H]
  \centering
  \caption{Key observations from recent demand-flexibility practice and research.}
  \label{tab:practical_insights}
  \footnotesize
  \begin{tabularx}{\textwidth}{>{\raggedright\arraybackslash}p{0.04\textwidth}>{\raggedright\arraybackslash}X}
    \toprule
    1. & \textbf{Direct load shedding is generally less attractive than load shifting when lost output must be valued.} Rotational load shedding directly reduces product output, incurring an opportunity cost equal to the product's value-added, which can substantially exceed electricity costs. Load shifting avoids this loss if output can be recovered later. \\
    2. & \textbf{Load shifting is often constrained by the capital costs of redundant capacity.} To provide flexible consumption without sacrificing product output, firms must build additional capacity to compensate for reduced production during high-price periods. The resulting energy cost savings are often insufficient to offset these capital expenditures. \\
    3. & \textbf{Working capital cost is a timescale-dependent hidden cost of load shifting.} Longer-duration shifting decouples the timing of production and sales, tying up cash in unsold products. This financing cost depends on the product value per unit of electricity consumed and the firm's opportunity cost of capital. \\
    4. & \textbf{Warehousing and labor costs are not the largest cost components in the cases examined here.} Under the assumptions used below, including temporary warehouse rental and full payroll during idle periods, these incremental costs are smaller than capacity and working-capital costs. \\
    5. & \textbf{Long-duration demand flexibility may be more competitive in some applications.} Frequent adjustment can be unsuitable for continuous-process industries and faces stronger competition from low-cost batteries. Seasonal shifting, by contrast, competes with long-duration energy storage and firm clean generation. \\
    \bottomrule
  \end{tabularx}
\end{table}

First, direct load shedding is generally less attractive than load shifting when lost output must be valued. In traditional power system contexts, demand response often implicitly means that consumers help balance power demand and supply by implementing rotational curtailments; large aluminum loads, for example, have been evaluated as providers of reliability services~\cite{todd2008providing}. However, if a facility reduces its electricity consumption without recovering the lost output later, the cost of avoiding electricity use is the value-added of the undelivered product or service, which can substantially exceed electricity costs. Direct load shedding is therefore most relevant to reliability emergencies or other periods of unusually high system value. For most large loads, a more viable form of flexibility is load shifting, i.e., increasing electricity consumption before or after constrained periods while maintaining the same total output.

Second, shifting loads generally requires redundant capacity at some stage of the production or service chain. A facility operating at full capacity cannot shift production and maintain the same total output unless it can overproduce during lower-priced periods. The required capacity rises because the same output must be produced in fewer operating hours. However, deliberately creating this flexibility requires additional capital investment that may not be recovered through electricity-cost savings under typical price spreads. Our recent aluminum-smelting study shows that substantial overcapacity can be cost-effective when a shift toward recycled aluminum makes part of the smelting fleet redundant for structural reasons~\cite{lyu_industrial_overcapacity_accepted}. Because the fixed capital costs of these assets are already sunk, the economics of providing flexibility change.

Third, load shifting can incur material working capital costs, particularly over longer timescales. For intra-day or daily shifting, the required inventory adjustment is often small relative to normal operating stocks, so the incremental financing cost may be limited. Over monthly or seasonal horizons, however, shifting production to low-price periods while sales remain aligned with customer demand forces firms to hold larger inventories, delaying cash conversion and locking up capital. The induced cost depends on the product value per unit of electricity consumed and the firm's opportunity cost of capital. This barrier is most pronounced in sectors with high product values or high corporate hurdle rates.

Fourth, in the cases examined here, incremental warehousing and labor costs are smaller than capacity and working-capital costs. Physical warehousing, idle labor management, and seasonal workforce coordination still require explicit accounting. In our aluminum case study, the value of demand flexibility offsets the additional warehousing and labor costs~\cite{lyu_industrial_overcapacity_accepted}. This result is consistent with the relatively small labor share reported for modern primary aluminum smelting and other capital-intensive industries~\cite{smm_costs_2024}.

Fifth, long-duration demand flexibility may be more competitive in some applications, whereas short-duration response faces stronger competition from low-cost batteries. Short-duration demand response can also be technically risky for continuous-process industries. Studies of aluminum electrolysis show that flexible regulation must respect mass and thermal dynamics~\cite{wong_studies_2023,shen_improved_2025}, while evidence on potline shutdowns and restarts documents additional cost, operational effort, and losses in cell life~\cite{driscoll_economics_2016,lukin_reduction_2016,tabereaux_loss_2016}. Seasonal shifting presents different economic dynamics because its alternatives include long-duration energy storage and firm low-carbon generation~\cite{sepulveda_role_2018,jenkins_long-duration_2021}.

These observations motivate the total-cost comparison below.

\section{The total costs of flexibility}\label{sec:technical_methods}

This section develops a complete mathematical formulation of flexibility cost. We define shifted energy, system value, and the incremental cost of flexible relative to baseline operation, including productive-capacity investment, working capital, warehousing, and restart costs where relevant. Unless otherwise stated, all costs and values are normalized by electricity supplied, avoided, or shifted during high-price hours and reported in CNY/kWh.

\subsection{The flexible operation and period}

Flexibility is implemented by moving electricity consumption or supply capability away from high marginal-price hours and toward low marginal-price hours. Energy storage charges or produces hydrogen during low-price hours and discharges during high-price hours. Industrial loads reduce load or stop production during high-price hours and make up production during other low-price or off-peak hours. We treat one such paired action (charge and discharge for energy storage, or reduced and increased use relative to the baseline load for demand flexibility) as one flexibility period with the length of $T$. Let $N_{\mathrm{cycle}}$ denote the number of times this flexibility period repeats within a year.
\begin{equation}
  N_{\mathrm{cycle}} = \frac{8760}{T}.
\end{equation}
Let $H$ denote the number of high-price hours actually shifted, avoided, or discharged within each cycle. If $p_{\mathrm{flex}}$ is the flexible power or load capacity available for shifting, and $e_1$ is the shifted or high-price available energy in each cycle, then
\begin{equation}
  e_1 = p_{\mathrm{flex}} H.
\end{equation}
Annual shifted energy $E_{\mathrm{shift,yr}}$ equals the shifted energy per cycle multiplied by the number of annual cycles.
\begin{equation}
  E_{\mathrm{shift,yr}} = e_1 N_{\mathrm{cycle}}.
\end{equation}

System value $V_{\mathrm{system}}$ is the average marginal price during the high-price hours. Low-price charging and make-up electricity enter the cost side rather than being subtracted from that value, so that technologies with different charging or make-up electricity costs remain comparable against a common system-value reference.

The focus of this study is the relative cost of different flexibility resources. System value is reported only as a reference benchmark. For simplicity, we represent it in the case study by the average peak-hour price observed in the electricity market. That choice embeds an efficient-market interpretation of marginal prices, but the assumption affects only the cost-to-value ratio, not the ranking of technologies by resource-side cost.

Typical high-price durations $H$ also differ across flexibility periods $T$. For energy storage, that pairing changes the required power-to-energy capacity ratio. The case-specific $(T,H)$ pairs used below are given in Section~\ref{sec:case_study}.

The load-side formulation is written for large loads whose baseline electricity use is approximately steady at the timescales of interest, as in many continuous industrial processes. Household or other strongly time-varying end-use profiles are outside the main scope. Computing loads such as data centers are not literally flat. Their power draw fluctuates with workload. The mechanism we capture is nonetheless the same. Once installed servers are fully used, that occupied capacity has little room to shift work without delaying service, so flexible operation again requires spare capacity. At the hourly timescale typical of many demand-response events, data-center load variation is also relatively limited compared with day-to-day or seasonal workload swings. Figure~\ref{fig:flex_schematic} therefore starts the load panel from a flat baseline power $P_0$ and shows the distinct physical buffers used by energy storage and load shifting under the cycle definitions in Eqs.~(1)--(3).
\begin{figure}[H]
  \centering
  \includegraphics[width=0.98\textwidth,height=0.58\textheight,keepaspectratio]{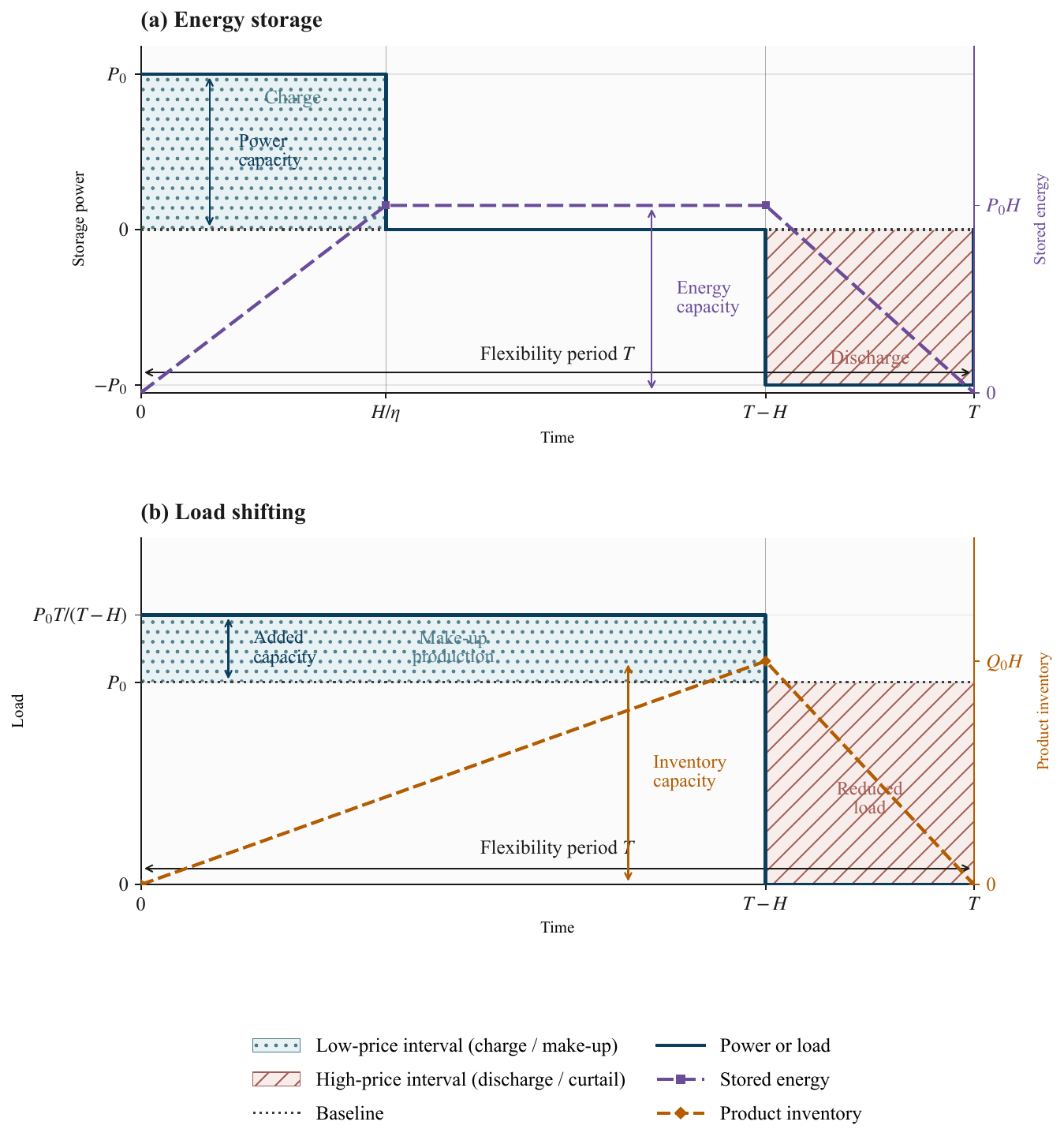}
  \caption{\textbf{(a)} Energy storage moves electricity through power and energy capacity across a flexibility period of length $T$, with high-price discharge duration $H$ and round-trip efficiency $\eta$. \textbf{(b)} Load shifting for a large load with an approximately flat baseline $P_0$ recovers the same product or service output through added productive capability and, where relevant, product inventory. Hatched bands mark low- and high-price intervals. The two panels use the same $(T,H)$ cycle definitions so that storage and load shifting can be compared on a common service basis.}
  \label{fig:flex_schematic}
\end{figure}

\subsection{The marginal costs of flexible operation}

Two principles guide the cost accounting. First, the relevant costs are those incurred by departing from baseline operation in order to provide flexibility, meaning additional investment and operating costs that would not arise under the baseline schedule. Costs that are already sunk under baseline operation are excluded, consistent with the economic treatment of sunk costs. Second, these incremental costs are expressed as a life-cycle levelized cost per unit of shifted energy, so that energy storage and load shifting can be compared on a common metric. Working capital, warehousing, and restart costs enter the numerator as distinct incremental terms where relevant.

For a given load, let $C_0$ denote the total cost under baseline operation, $C_{\mathrm{flex}}$ denote the total cost under flexible operation, and $E_{\mathrm{shift,yr}}$ denote annual shifted energy. The levelized incremental flexibility cost is
\begin{equation}
  c_{\mathrm{flex}}
  =
  \frac{
    C_{\mathrm{flex}}-C_0
  }{
    E_{\mathrm{shift,yr}}
  }.
\end{equation}
where $C_{\mathrm{flex}}-C_0$ is the incremental annualized cost of flexible operation relative to baseline operation. It comprises low-price charging or make-up electricity $\Delta C_{\mathrm{op}}$, annualized power-, energy-, production-, or computing-capacity investment $\Delta C_{\mathrm{cap}}$, working-capital carrying cost of physical-product inventory $\Delta C_{\mathrm{wc}}$, and restart or other process-disruption cost $\Delta C_{\mathrm{aux}}$.
\begin{equation}
  \label{eq:incremental_components}
  C_{\mathrm{flex}}-C_0
  =
  \Delta C_{\mathrm{op}}
  +
  \Delta C_{\mathrm{cap}}
  +
  \Delta C_{\mathrm{wc}}
  +
  \Delta C_{\mathrm{aux}}.
\end{equation}
As noted above, the avoided high-price electricity value is kept outside $C_{\mathrm{flex}}-C_0$ and reported separately as the system-value reference $V_{\mathrm{system}}$. Enabling items such as communication, metering, controls, and retrofit are not a separate omitted category. Capital outlays required for flexible operation enter $\Delta C_{\mathrm{cap}}$; other items are assigned to the operating or disruption term to which they belong.

\subsection{Capital-cost annualization}

Following the life-cycle costing approach used by Brandt et al.~\cite{brandt2021_blow_wind_blow} and related levelized-cost studies, capital outlays are first converted to equivalent annual costs and then divided by annual shifted energy to obtain a levelized unit cost. For an original investment $I$, asset lifetime $y$, and discount rate $r_{\mathrm{int}}$, the capital recovery factor is
\begin{equation}
  \mathrm{CRF} =
  \frac{
    r_{\mathrm{int}}(1+r_{\mathrm{int}})^y
  }{
    (1+r_{\mathrm{int}})^y - 1
  }.
\end{equation}
When $r_{\mathrm{int}}=0$, $\mathrm{CRF}=1/y$.

If fixed O\&M is a fraction $f_{\mathrm{FOM}}$ of investment, the equivalent annualized capacity cost is
\begin{equation}
  K = I\left(\mathrm{CRF}+f_{\mathrm{FOM}}\right).
\end{equation}
If fixed O\&M is instead an absolute annual amount $\mathrm{FOM}_{\mathrm{abs}}$, then
\begin{equation}
  K = I\cdot\mathrm{CRF}+\mathrm{FOM}_{\mathrm{abs}}.
\end{equation}
Subsequent $K$ terms already include capital recovery and fixed O\&M. Technology-specific investment costs, lifetimes, discount rates, and O\&M assumptions are reported in the case study.

\section{LCPE of Energy-storage}\label{sec:storage_lcpe}

This section applies the levelized incremental flexibility cost to energy storage. Capacity investment is paid once and recovered over the annual cycles $N_{\mathrm{cycle}}=8760/T$. As the flexibility period lengthens, the same power and energy capacity serves less annual shifted energy $E_{\mathrm{shift,yr}}=e_1N_{\mathrm{cycle}}$, so the levelized capacity cost rises roughly in proportion to the period length. Low-price charging electricity remains on the cost side, as noted above. The expressions below are written for batteries and hydrogen storage; other storage technologies in the case study use the same structure with their own annualized capacity costs and efficiencies.

\subsection{Battery storage}

Battery cost comprises energy-capacity, power-capacity, and charging-electricity terms. Let $K_E$ and $K_P$ be the equivalent annualized energy- and power-capacity costs. The flexibility period $T$ sets the number of cycles per year, while the high-price discharge duration $H$ sets the power capacity needed for each event.

For a cycle that supplies $e_1$ during high-price hours, the required battery energy capacity, charging electricity, and power capacity are
\begin{equation}
  E_{\mathrm{bat}} = e_1,
\end{equation}
\begin{equation}
  E_{\mathrm{charge}} =
  \frac{e_1}{\eta_{\mathrm{rt}}},
\end{equation}
\begin{equation}
  p_{\mathrm{bat}} =
  \frac{e_1}{H}.
\end{equation}
For comparability across technologies, charging and discharging power capacities are set equal, with discharge power sized by $H$ and efficiency losses reflected in charging energy (and therefore charging time) at that power. The calculation does not optimize a technology-specific charging duration or an asymmetric converter configuration. Separate sizing of the two directions can lower investment for some storage technologies, so the reported costs may overstate individually optimized designs. The common convention is kept so that the comparison does not embed a different operating strategy in each technology.

With annual shifted energy $E_{\mathrm{shift,yr}}=e_1N_{\mathrm{cycle}}$ and low-price charging price $\lambda_{\mathrm{low}}$, the annual battery cost is
\begin{equation}
  C_{\mathrm{battery,yr}} =
  e_1 K_E
  +
  \frac{e_1 K_P}{H}
  +
  N_{\mathrm{cycle}}
  \frac{e_1 \lambda_{\mathrm{low}}}{\eta_{\mathrm{rt}}}.
\end{equation}
Dividing by annual shifted energy and substituting $N_{\mathrm{cycle}}=8760/T$ gives
\begin{equation}
  c_{\mathrm{battery}} =
  \frac{C_{\mathrm{battery,yr}}}{e_1N_{\mathrm{cycle}}}
  =
  \frac{\lambda_{\mathrm{low}}}{\eta_{\mathrm{rt}}}
  +
  \underbrace{
    \frac{T}{8760}
    \left(
      K_E
      +
      \frac{K_P}{H}
    \right)
  }_{\text{capacity cost rising with \(T\)}}.
\end{equation}
The first term is charging electricity. The second is annualized capacity cost and grows with the flexibility period. If the battery cycles only once per year, the same battery and inverter costs are allocated to a single shifted-energy event; if it cycles daily, those costs are spread over hundreds of events.

\subsection{Hydrogen storage}

Hydrogen storage is represented as a power-to-hydrogen-to-power chain with electrolysis, hydrogen storage, and fuel-cell reconversion. Let $K_{\mathrm{ely}}$, $K_{\mathrm{fc}}$, and $K_{\mathrm{store}}$ be the corresponding equivalent annualized capacity costs. Lower conversion efficiency raises the electricity-purchase term, while inexpensive energy capacity can soften the rise of capital cost at long duration relative to batteries. Annual cycling still governs capacity utilization in the same way as for batteries.

As above, the central convention sets electrolyzer and fuel-cell power equal and sized by $H$, rather than optimizing an asymmetric charging window. Conversion losses enter purchased electricity and hydrogen energy. The required capacities are
\begin{align}
  k_{\mathrm{ely}} &=
  \frac{e_1}{H}, \\
  k_{\mathrm{fc}} &=
  \frac{e_1}{H}, \\
  e_{\mathrm{h2}} &=
  \frac{e_1}{\eta_{\mathrm{fc}}}.
\end{align}
The annual hydrogen-storage cost is
\begin{equation}
  C_{\mathrm{h2,yr}} =
  k_{\mathrm{ely}} K_{\mathrm{ely}}
  +
  k_{\mathrm{fc}} K_{\mathrm{fc}}
  +
  e_{\mathrm{h2}} K_{\mathrm{store}}
  +
  N_{\mathrm{cycle}}
  \frac{e_1 \lambda_{\mathrm{low}}}{\eta_{\mathrm{ely}}\eta_{\mathrm{fc}}}.
\end{equation}
Dividing by annual shifted energy and substituting $N_{\mathrm{cycle}}=8760/T$ yields
\begin{equation}
  c_{\mathrm{h2}} =
  \frac{C_{\mathrm{h2,yr}}}{e_1N_{\mathrm{cycle}}}
  =
  \frac{\lambda_{\mathrm{low}}}{\eta_{\mathrm{ely}}\eta_{\mathrm{fc}}}
  +
  \underbrace{
    \frac{T}{8760}
    \left(
      \frac{K_{\mathrm{ely}}}{H}
      +
      \frac{K_{\mathrm{fc}}}{H}
      +
      \frac{K_{\mathrm{store}}}{\eta_{\mathrm{fc}}}
    \right)
  }_{\text{capacity cost rising with \(T\)}}.
\end{equation}
As for batteries, annualized capacity costs are diluted by annual shifted energy and therefore rise with the flexibility period at fixed $H$.

The factor $T/8760$ is a utilization effect distinct from round-trip efficiency losses. Improving efficiency reduces the electricity-purchase term but does not remove this factor. Long-duration competitiveness therefore requires low capacity costs relative to the service provided, not only higher conversion efficiency.

\section{Load-shifting LCPE}\label{sec:flexible_production}

This section applies the levelized incremental flexibility cost to load shifting with fixed total output. As discussed in Section~\ref{sec:observations}, direct shedding without later recovery is closer to rotational curtailment. Lost output is then valued at the product or service value-added and can far exceed electricity costs, so it is unlikely to be the mainstream form of large-load flexibility except in reliability emergencies.

As noted in Section~\ref{sec:technical_methods}, the derivation focuses on large loads with approximately steady baseline electricity use at the timescales considered here. Total product or service demand over each flexibility period must still be met, so output reduced in high-price hours is restored by make-up production or computing in other hours. Physical-product inventory reconciles production with sales where relevant.

Relative to energy storage, the distinctive terms are productive-capacity investment, working-capital financing, warehousing, and process disruption. The subsections below derive these costs under no-excess and sunk-excess capacity. Scaling relations are collected afterward.

\subsection{No-excess-capacity case}
\label{sec:no_excess}

In the no-excess setting, baseline production or computing capacity is already fully used. Avoiding high-price hours while keeping total output therefore requires higher operating intensity in other hours and, when no idle capability exists, additional production lines, furnaces, servers, cooling, or power infrastructure.

\subsubsection{Industrial productive-capacity investment}

For industrial loads that make up physical product, productive-capacity investment covers production equipment, fixed labor, and other fixed costs configured with available capacity. The capacity boundary must match the physical configuration. If flexibility requires linked stages to expand together, $K_{\mathrm{tonne/year}}$ or $K_C$ represents the integrated block; if intermediate stages are temporally decoupled, the same formula can be applied only to the component that actually expands. Existing oversized or decoupled stages are treated as the sunk-excess case below rather than as no-excess.

To place industrial capacity in electricity-shifting units, convert the equivalent annualized cost per unit of annual production capacity into an equivalent annualized cost per kilowatt of electrical load capability, $K_C$. With electricity intensity $e_{\mathrm{intensity}}$,
\begin{equation}
  K_C =
  \frac{K_{\mathrm{tonne/year}}}
  {e_{\mathrm{intensity}}/8760}.
\end{equation}
Let $p_0$ be baseline load capability, $D=p_0T$ the demand in one flexibility period, $e_1$ the shifted high-price energy, and $\Delta t=e_1/p_0$ the avoided high-price duration. Make-up runtime and required added power are
\begin{equation}
  T_{\mathrm{run,no\ excess}} =
  T-\Delta t,
  \qquad
  \Delta p =
  \frac{e_1}{T_{\mathrm{run,no\ excess}}}.
\end{equation}
The annualized capital cost of this added capacity, divided by annual shifted energy $E_{\mathrm{shift,yr}}=e_1N_{\mathrm{cycle}}$, is
\begin{equation}
  c_{\mathrm{cap,no\ excess}} =
  \frac{K_C\Delta p}{E_{\mathrm{shift,yr}}}
  =
  \frac{K_C}
  {N_{\mathrm{cycle}}T_{\mathrm{run,no\ excess}}}.
\end{equation}
Added power is set by the make-up duration within one cycle; more annual cycles dilute the same added capacity over more shifted energy. When $e_1/p_0$ is small, $T_{\mathrm{run,no\ excess}}\approx T$ and
\begin{equation}
  c_{\mathrm{cap,no\ excess}}
  \approx
  \frac{K_C}{8760},
\end{equation}
which does not depend on the flexibility period. Load-side added capacity therefore does not share the energy-storage rise with period length. It mainly raises operating intensity in remaining hours, whereas storage capacity must bridge a longer intertemporal transfer.

\subsubsection{Data-center capacity investment}

For data centers, the corresponding investment is additional IT load with supporting power, cooling, and service capability. Without idle capability, reducing electricity use in high-price hours while maintaining service output requires extra computing capacity for later make-up. Data centers need not hold physical inventory, but delay or service-quality losses can create a separate time cost and should be added when parameterized. With annualized computing-capacity cost $K_{\mathrm{IT}}$,
\begin{equation}
  c_{\mathrm{cap,dc}} =
  \frac{K_{\mathrm{IT}}}
  {N_{\mathrm{cycle}}T_{\mathrm{run,no\ excess}}}.
\end{equation}

\subsubsection{Working-capital financing cost}

Working-capital carrying cost is the opportunity cost of product value locked in inventory when output is produced early. It is distinct from physical warehousing. Longer flexibility periods typically lengthen average lock-up, so the term matters especially for long-duration aluminum and steel shifting. With inventory value $\lambda_{\mathrm{product}}$, financing rate $r_{\mathrm{discount}}$, lock-up duration $\tau_{\mathrm{wc}}$, and electricity intensity $e_{\mathrm{intensity}}$,
\begin{equation}
  c_{\mathrm{wc}} =
  \frac{
    \lambda_{\mathrm{product}} \cdot r_{\mathrm{discount}} \cdot \tau_{\mathrm{wc}}
  }{e_{\mathrm{intensity}}}.
\end{equation}
When the relative production--sales shape is held fixed, $\tau_{\mathrm{wc}}$ rises with the flexibility period. Section~\ref{sec:load_scaling} formalizes this linear growth.

\subsubsection{Physical warehousing cost}

Physical warehousing plays a role analogous to energy capacity in storage. It covers warehouse space, yards, material handling, and inventory management, not the financing of inventory value. With production and demand time series, inventory $\mathrm{Inv}(t)$, stacking density $\rho_{\mathrm{material}}$, effective height $h_{\mathrm{effective}}$, and space utilization $\eta_{\mathrm{storage}}$,
\begin{align}
  \mathrm{Inv}(t) &=
  \mathrm{cumsum}(\mathrm{Production})
  -
  \mathrm{cumsum}(\mathrm{Demand}), \\
  \mathrm{Area}(t) &=
  \frac{\mathrm{Inv}(t)}
  {\rho_{\mathrm{material}} \cdot h_{\mathrm{effective}} \cdot \eta_{\mathrm{storage}}}, \\
  \mathrm{Warehousing\ cost} &=
  \mathrm{Area} \cdot \mathrm{rental\ rate} \cdot \mathrm{time}.
\end{align}
When a tonne-month proxy is used instead, with warehousing rate $w_{\mathrm{wh}}$ and inventory duration $m(T)$,
\begin{equation}
  c_{\mathrm{wh}} =
  \frac{w_{\mathrm{wh}}m(T)}{e_{\mathrm{intensity}}}.
\end{equation}
Sector-specific $w_{\mathrm{wh}}$, $m(T)$, and $e_{\mathrm{intensity}}$ are case-study inputs.

\subsubsection{Process-disruption cost}

Restart and recovery costs arise from deviations from stable operation, including shutdowns, restarts, thermal or chemical recovery, short interruptions, equipment-life loss, efficiency loss, and quality loss. They are incremental to baseline operation and are closer to power-capacity costs than to energy costs, because one event affects a block of equipment and is then allocated to avoided electricity. Controlled partial modulation within a stable operating range may avoid a full shutdown, whereas complete interruption can break the process state on which continuous production depends.

If one event costs $S_{\mathrm{restart}}$ per affected load capacity (CNY/kW/event) and the high-price reduction duration is $H$, the affected power is $p_{\mathrm{restart}}=e_1/H$ and
\begin{equation}
  c_{\mathrm{restart}} =
  \frac{S_{\mathrm{restart}}}{H}.
\end{equation}
Annual event cost and annual shifted energy both scale with $N_{\mathrm{cycle}}$, so the cycle count cancels. Multiple equivalent events per cycle can multiply $S_{\mathrm{restart}}$. Because the term falls as $1/H$, a fixed event cost becomes smaller per shifted kilowatt-hour as the shifted duration lengthens. Event costs and process boundaries are case-study inputs.

\subsubsection{Aggregate no-excess cost}

Summing the incremental terms and dividing by annual shifted energy gives the no-excess LCPE for physical-product loads,
\begin{equation}
  c_{\mathrm{no\ excess}} =
  \lambda_{\mathrm{run,no\ excess}}
  +
  c_{\mathrm{cap,no\ excess}}
  +
  c_{\mathrm{wc}}
  +
  c_{\mathrm{wh}}
  +
  c_{\mathrm{restart}},
\end{equation}
where $\lambda_{\mathrm{run,no\ excess}}$ is the average make-up electricity price over $T_{\mathrm{run,no\ excess}}$. For a data center without parameterized inventory or restart,
\begin{equation}
  c_{\mathrm{no\ excess,dc}} =
  \lambda_{\mathrm{run,no\ excess}}
  +
  c_{\mathrm{cap,dc}}.
\end{equation}
Delay, service-quality, or contractual costs should be added when they are part of the selected case boundary.

\subsection{Sunk-excess-capacity case}
\label{sec:sunk_excess}

In the sunk-excess setting, idle production or computing capacity already exists, so the added-capacity term is set to zero. Make-up electricity, warehousing, working capital, and restart or recovery costs remain where applicable.

For physical-product loads,
\begin{equation}
  c_{\mathrm{sunk}} =
  \lambda_{\mathrm{run,sunk}}
  +
  c_{\mathrm{wc}}
  +
  c_{\mathrm{wh}}
  +
  c_{\mathrm{restart}},
\end{equation}
with $\lambda_{\mathrm{run,sunk}}$ the average electricity price in hours that use idle capacity. For a data center with sunk excess capacity and no separately parameterized delay cost,
\begin{equation}
  c_{\mathrm{sunk,dc}} =
  \lambda_{\mathrm{run,sunk}}.
\end{equation}

\subsubsection{Linking overcapacity to avoidable high-price hours}

Under the definition of the overcapacity ratio used here, the avoidable high-price share equals that ratio, $\rho=r_C$ (Proposition~\ref{prop:overcapacity_movable_energy} below). When that ratio is linked to price selection, the top $\rho$ share of hours in each window is treated as avoided, and the remaining $1-\rho$ share is the operating period. Then
\begin{equation}
  \lambda_{\mathrm{run,sunk}}
  =
  \mathrm{mean}
  \left(
    \lambda_t \mid
    t \notin \mathrm{top}\ \rho\ \mathrm{price\ hours}
  \right).
\end{equation}
In the no-excess case, avoided duration is $\Delta t=e_1/p_0$ and make-up runtime is $T_{\mathrm{run,no\ excess}}=T-\Delta t$, with average price $\lambda_{\mathrm{run,no\ excess}}$.

\subsection{Scaling laws for load-shifting costs}
\label{sec:load_scaling}

The cost expressions above already indicate how the main terms change with the flexibility period and shifted duration. The propositions below formalize those relations.

\begin{proposition}[The $O(1)$ scaling of added-capacity cost with the shifting period]
  \label{prop:capacity_o1}
  When the same share $\rho$ of high-price hours is avoided in each flexibility period, added-capacity cost levelized by annual shifted energy does not change with the flexibility period $T$, i.e., $c_{\mathrm{cap}}=O(1)$.
\end{proposition}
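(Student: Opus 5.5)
The plan is to start from the no-excess added-capacity expression derived in Section~\ref{sec:no_excess}, $c_{\mathrm{cap,no\ excess}} = K_C\Delta p/E_{\mathrm{shift,yr}} = K_C/(N_{\mathrm{cycle}}T_{\mathrm{run,no\ excess}})$, and make explicit how each quantity depends on $T$ under a fixed avoided share $\rho$. The data-center expression $c_{\mathrm{cap,dc}}$ has the same form with $K_{\mathrm{IT}}$ in place of $K_C$, so one argument covers both cases.

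First, I parametrize the cycle. A fixed share $\rho$ of high-price hours in each period $T$ is avoided, so the avoided duration is $\Delta t=\rho T$. The shifted energy per cycle is then $e_1=p_0\Delta t=p_0\rho T$, and the make-up runtime is $T_{\mathrm{run,no\ excess}}=T-\Delta t=(1-\rho)T$.

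Second, I compute the added power. From $\Delta p=e_1/T_{\mathrm{run,no\ excess}}$ I get
\begin{equation*}
  \Delta p=\frac{p_0\rho T}{(1-\rho)T}=\frac{\rho}{1-\rho}\,p_0 ,
\end{equation*}
which is independent of $T$. The interpretation is that the same output must be produced in a fixed fraction of the hours.

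Third, I compute annual shifted energy. Substituting $N_{\mathrm{cycle}}=8760/T$ gives $E_{\mathrm{shift,yr}}=e_1N_{\mathrm{cycle}}=8760\,\rho\, p_0$, which is also independent of $T$.

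Dividing the annualized capacity cost by the annual shifted energy then gives
\begin{equation*}
  c_{\mathrm{cap}}=\frac{K_C\,\Delta p}{E_{\mathrm{shift,yr}}}=\frac{K_C}{8760\,(1-\rho)} .
\end{equation*}
The same value follows from $K_C/(N_{\mathrm{cycle}}T_{\mathrm{run,no\ excess}})=K_C T/(8760(1-\rho)T)$. This is exact rather than only asymptotic, so $c_{\mathrm{cap}}=O(1)$ in $T$. It also sharpens the approximate relation $c_{\mathrm{cap,no\ excess}}\approx K_C/8760$ stated earlier, which is recovered as $\rho\to 0$.

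The main obstacle is making precise what ``the same share $\rho$'' means. The paper sometimes uses $H$ as the shifted hours per cycle and sometimes uses $\Delta t=e_1/p_0$. I would state explicitly that the assumption is $\Delta t=\rho T$ (equivalently $H=\rho T$, with $e_1=p_0H$) and that $\rho<1$ is fixed.

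I also need to cover the sunk-excess case, where the added-capacity term is zero, which is trivially $O(1)$. Finally, I would note that $K_C$ and $K_{\mathrm{IT}}$ are annualized per-kW costs determined by investment, CRF and FOM, none of which depend on $T$. The contrast with storage, whose capacity term scales as $T/8760$, should then be remarked on.
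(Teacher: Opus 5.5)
Your proposal is correct and follows essentially the same route as the paper. With $\Delta t=\rho T$, both the added capacity $\Delta p=\rho p_0/(1-\rho)$ and the annual shifted energy $8760\rho p_0$ are independent of $T$, so $c_{\mathrm{cap}}=K_C/[8760(1-\rho)]$ exactly; this is the paper's computation with your $p_0$ in place of its baseline capacity $C$, and your remarks on the data-center analogue, the sunk-excess case, and the $\rho\to 0$ limit are harmless additions.
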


\begin{proof}
  Consider a load that must satisfy fixed demand $D=CT$ over a flexibility period of length $T$, where $D$ is total demand in that period and $C$ is the equivalent load capacity just sufficient to meet baseline demand. If flexible operation avoids a share $\rho$ of high-price hours without reducing total output, production or service must be completed within $(1-\rho)T$ hours after avoiding $\rho T$ hours. Let $C_{\mathrm{flex}}$ denote the available capacity required for flexible operation. Then
  \begin{equation}
    C_{\mathrm{flex}} =
    \frac{C}{1-\rho}.
  \end{equation}
  Let $\Delta C$ denote the required added capacity.
  \begin{equation}
    \Delta C =
    C_{\mathrm{flex}}-C
    =
    \frac{\rho}{1-\rho}C.
  \end{equation}
  Thus, at fixed avoided share $\rho$, physical capacity expansion $\Delta C$ does not increase with the flexibility period $T$. The high-price electricity shifted within one window is $E_{\mathrm{shift}}=\rho CT$. There are $N_{\mathrm{cycle}}=8760/T$ windows per year, so annual shifted energy is
  \begin{equation}
    N_{\mathrm{cycle}}E_{\mathrm{shift}}
    =
    \frac{8760}{T}\rho CT
    =
    8760\rho C.
  \end{equation}
  If the equivalent annualized cost of added capacity is $K_C$, the added-capacity cost per unit of shifted energy is
  \begin{equation}
    c_{\mathrm{cap}}
    =
    \frac{K_C \Delta C}
    {N_{\mathrm{cycle}} \rho CT}
    =
    \frac{K_C}{8760(1-\rho)},
  \end{equation}
  which is also independent of $T$.
\end{proof}

\begin{proposition}[The $O(T)$ scaling of working-capital carrying cost with shifting duration]
  \label{prop:working_capital_ot}
  If the relative timing of production and sales is stretched in proportion to the flexibility period $T$, working-capital carrying cost per unit of shifted energy increases linearly with $T$, i.e., $c_{\mathrm{wc}}=O(T)$.
\end{proposition}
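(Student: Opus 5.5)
I would prove Proposition~\ref{prop:working_capital_ot} in the same way as Proposition~\ref{prop:capacity_o1}. The idea is to write the per-cycle inventory trajectory in normalized time, compute the annual carrying cost, and divide by annual shifted energy. Fix a flexibility period of length $T$ with demand $D=CT$, and let the avoided share $\rho$ be fixed as before, so that $E_{\mathrm{shift}}=\rho CT$ per window and $N_{\mathrm{cycle}}E_{\mathrm{shift}}=8760\rho C$ per year.

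The hypothesis that the relative timing of production and sales is stretched in proportion to $T$ can be formalized as follows. Cumulative production and cumulative demand within a window are written as $D\,P(t/T)$ and $D\,S(t/T)$, where $P$ and $S$ are fixed nondecreasing profiles on $[0,1]$ with $P(1)=S(1)=1$. The inventory in Eq.~(for $\mathrm{Inv}(t)$) is then
\begin{equation*}
  \mathrm{Inv}(t)=\frac{D}{e_{\mathrm{intensity}}}\,g(t/T),
  \qquad g:=P-S,
\end{equation*}
where the production quantities are written in electricity units and converted to tonnes through $e_{\mathrm{intensity}}$. Only the positive part of $g$ (product made early) ties up working capital. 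The financing cost of one window is
\begin{equation*}
  \lambda_{\mathrm{product}}\,r_{\mathrm{discount}}\int_0^T \mathrm{Inv}^+(t)\,\frac{dt}{8760}.
\end{equation*}
The change of variables $s=t/T$ turns this into
\begin{equation*}
  \lambda_{\mathrm{product}}\,r_{\mathrm{discount}}\,\frac{D\,T}{8760\,e_{\mathrm{intensity}}}\,G,
  \qquad G:=\int_0^1 g^+(s)\,ds,
\end{equation*}
where $G$ is a shape constant independent of $T$.

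Next, multiply by $N_{\mathrm{cycle}}=8760/T$ and substitute $D=CT$. The annual carrying cost is $\lambda_{\mathrm{product}}r_{\mathrm{discount}}CTG/e_{\mathrm{intensity}}$. Dividing by $8760\rho C$ gives
\begin{equation*}
  c_{\mathrm{wc}}=\frac{\lambda_{\mathrm{product}}\,r_{\mathrm{discount}}}{e_{\mathrm{intensity}}}\cdot\frac{G\,T}{8760\,\rho}.
\end{equation*}
This is linear in $T$. It also matches the earlier formula for $c_{\mathrm{wc}}$ once the lock-up duration is identified as $\tau_{\mathrm{wc}}=GT/(8760\rho)$ years, which is inventory-weighted lock-up per unit shifted and grows proportionally with $T$.

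The main obstacle is stating the scaling hypothesis precisely enough that $G$ is truly $T$-independent. Two points need care. First, the positive part must be taken so that only early production is financed. Second, any baseline operating stock must be excluded, because it is sunk and appears in both $C_0$ and $C_{\mathrm{flex}}$. I would make both explicit by defining $\mathrm{Inv}$ as the incremental inventory relative to baseline. I would also note that the conclusion holds whenever $\tau_{\mathrm{wc}}\propto T$, even without fixed $\rho$.
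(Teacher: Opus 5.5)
Your proof is correct, but it takes a more explicit route than the paper's. The paper treats the average lock-up duration $\tau_{\mathrm{wc}}$ as the primitive. It writes $c_{\mathrm{wc}}=v_e r_{\mathrm{discount}}\tau_{\mathrm{wc}}$ with $v_e=\lambda_{\mathrm{product}}/e_{\mathrm{intensity}}$, then reads $\tau_{\mathrm{wc}}=\gamma T/8760$ directly off the stretching hypothesis, so it never computes an inventory integral or uses $\rho$ or $N_{\mathrm{cycle}}$. You instead derive that step. You write cumulative production and sales as fixed profiles in $t/T$, integrate the inventory, multiply by $N_{\mathrm{cycle}}$, and divide by the $T$-independent annual shifted energy $8760\rho C$. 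This shows why $\tau_{\mathrm{wc}}\propto T$: per-window stock scales with $D=CT$ and holding time with $T$, while the number of cycles scales with $1/T$. It also identifies the shape factor as $\gamma=G/\rho$. The paper's version gains brevity and does not depend on any inventory-accounting convention. Yours gains an explicit mechanism, a constant tied to the paper's $\mathrm{Inv}(t)$ expression, and a clean contrast with storage, where the annualized cost is fixed and the denominator shrinks as $1/T$.

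One refinement to your formalization: $\int_0^1 g^+\,ds$ depends on where you start the window. If a window begins with the avoided hours, then $g\le 0$ throughout and you get $G=0$, even though the stock sold in those hours had to be pre-built in the previous window. Using the incremental stock needed to avoid stockouts, $g-\min_s g(s)$, removes this dependence for a periodic schedule. For the simple block schedule it gives $G=\rho/2$, so $\gamma=1/2$ and the average holding time is half a period. This does not change your conclusion, because any $T$-independent shape functional still gives $c_{\mathrm{wc}}\propto T$.
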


\begin{proof}
  Let $v_e=\lambda_{\mathrm{product}}/e_{\mathrm{intensity}}$ denote product value per unit of electricity consumption, with $\lambda_{\mathrm{product}}$ the product price and $e_{\mathrm{intensity}}$ the electricity intensity. Let $r_{\mathrm{discount}}$ be the annual cost of capital and $\tau_{\mathrm{wc}}$ the average cash lock-up duration per unit of shifted energy in years. Then
  \begin{equation}
    c_{\mathrm{wc}} =
    v_e r_{\mathrm{discount}}\tau_{\mathrm{wc}}.
  \end{equation}
  If production--sales trajectories keep the same relative shape and are only stretched with period length, then $\tau_{\mathrm{wc}}=\gamma T/8760$ for a shape factor $\gamma$, and
  \begin{equation}
    c_{\mathrm{wc}}
    =
    \frac{\gamma v_e r_{\mathrm{discount}}}{8760}T
    =
    O(T).
  \end{equation}
\end{proof}

\begin{proposition}[Relationship between excess capacity and movable energy]
  \label{prop:overcapacity_movable_energy}
  Under the definition of overcapacity ratio used in this study, the maximum share of high-price hours that can be avoided within one period equals the existing overcapacity ratio, $\rho=r_C$, and the corresponding movable energy within one window is $E_{\mathrm{movable}}=r_CD$.
\end{proposition}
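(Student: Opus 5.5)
The argument mirrors the capacity-counting used in the proof of Proposition~\ref{prop:capacity_o1}, run in reverse: there we asked how much capacity a given avoided share requires; here we fix the existing capacity and ask how large an avoided share it permits. I would set up the same window of length $T$ with fixed demand $D=CT$. Here $C$ is the capacity just sufficient for baseline demand and $C_{\mathrm{available}}\ge C$ is the installed capacity. The overcapacity ratio is $r_C=(C_{\mathrm{available}}-C)/C_{\mathrm{available}}$, so equivalently $C_{\mathrm{available}}=C/(1-r_C)$.

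\textbf{Step 1 (feasibility constraint).} If a share $\rho$ of the hours in the window is avoided, output must be completed within $(1-\rho)T$ hours. Running at most at $C_{\mathrm{available}}$, the deliverable output is at most $C_{\mathrm{available}}(1-\rho)T$. Feasibility is therefore
\begin{equation*}
  C_{\mathrm{available}}(1-\rho)T \ge CT,
\end{equation*}
which is equivalent to $\rho\le 1-C/C_{\mathrm{available}}=r_C$.

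\textbf{Step 2 (attainment).} The bound is attained by running at full available capacity in the remaining $(1-r_C)T$ hours. This gives output exactly $C_{\mathrm{available}}(1-r_C)T=CT=D$. Hence the maximum avoidable share is $\rho=r_C$. Equivalently, substituting $C_{\mathrm{flex}}=C/(1-\rho)$ from Proposition~\ref{prop:capacity_o1} and requiring $C_{\mathrm{flex}}\le C_{\mathrm{available}}$ gives the same bound with no added capacity.

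\textbf{Step 3 (movable energy).} Under the flat-baseline convention, the high-price energy displaced in the window is baseline load times avoided hours, $C\cdot\rho T$. This is the same accounting as $E_{\mathrm{shift}}=\rho CT$ in the proof of Proposition~\ref{prop:capacity_o1}. At $\rho=r_C$ it equals $r_C CT=r_C D$, which is $E_{\mathrm{movable}}$.

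The main obstacle is definitional rather than technical: the conclusion $\rho=r_C$ holds only because $r_C$ is normalized by $C_{\mathrm{available}}$ rather than by $C$, and because movable energy is measured against the baseline load $C$ rather than the installed capacity. I would state both conventions explicitly. I would also note that with the alternative normalization $(C_{\mathrm{available}}-C)/C$ the answer would instead be $\rho=r/(1+r)$, so that the identity is clearly tied to the paper's stated definition.
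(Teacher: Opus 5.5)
Your proposal is correct and follows the paper's argument. You write $C_{\mathrm{available}}=C/(1-r_C)$, run at full available capacity so that $T_{\mathrm{run}}=D/C_{\mathrm{available}}=(1-r_C)T$, read off $\rho=r_C$, and obtain $E_{\mathrm{movable}}=C(T-T_{\mathrm{run}})=r_CD$. Your Step~1 feasibility inequality $C_{\mathrm{available}}(1-\rho)T\ge CT$ only adds an explicit justification that this is the \emph{maximum} avoidable share, which the paper leaves implicit by assuming full-capacity operation.
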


\begin{proof}
  Let $C$ denote the capacity required to meet baseline demand and $C_{\mathrm{available}}$ the total available capacity. Existing excess capacity is $C_{\mathrm{available}}-C$, and the overcapacity ratio is
  \begin{equation}
    r_C =
    \frac{C_{\mathrm{available}}-C}{C_{\mathrm{available}}}.
  \end{equation}
  Hence
  \begin{equation}
    C_{\mathrm{available}} =
    \frac{C}{1-r_C}.
  \end{equation}
  Holding total demand $D=CT$ fixed and adding no new capacity, if the facility runs using all available capacity, the runtime required to meet demand is
  \begin{equation}
    T_{\mathrm{run}}
    =
    \frac{D}{C_{\mathrm{available}}}
    =
    (1-r_C)T.
  \end{equation}
  The avoidable time is therefore $T-T_{\mathrm{run}}=r_CT$, so $\rho=r_C$. The corresponding movable energy within one window is
  \begin{equation}
    E_{\mathrm{movable}}
    =
    C(T-T_{\mathrm{run}})
    =
    r_C D.
  \end{equation}
\end{proof}

\section{Case study and discussion}\label{sec:case_study}
\label{sec:results_discussion}

This section applies the preceding formulas to battery, pumped-hydro, vanadium-flow and hydrogen storage, aluminum smelting, steelmaking, and data centers, then discusses how the comparison changes across flexibility timescales. The main text retains only the inputs needed to read the figures.

\subsection{Flexibility timescales and price data}

Four representative flexibility cases are evaluated. The labels describe the order of magnitude of the high-price hours shifted, whereas $T$ denotes the complete cycle window. Thus, hourly flexibility is modeled within a 24-h cycle and seasonal flexibility within an annual cycle.

\begin{table}[H]
  \centering
  \caption{Case-study flexibility periods and high-price reduction durations}
  \label{tab:cycle_periods}
  \footnotesize
  \begin{tabularx}{\textwidth}{>{\raggedright\arraybackslash}Xrrr}
    \toprule
    \textbf{Flexibility label} & \textbf{Period $T$} & \textbf{Shifted duration $H$} & \textbf{Cycles per year} \\
    \midrule
    hourly   & 24 h   & 2.6 h  & 365 \\
    daily    & 168 h  & 24 h   & 52 \\
    weekly   & 730 h  & 168 h  & 12 \\
    seasonal & 8760 h & 2160 h & 1 \\
    \bottomrule
  \end{tabularx}
\end{table}

Prices are observed 15-minute Guangdong day-ahead provincial-average node prices for the latest complete 365-day sample. For each complete price window $W$, the $H$ hours in Table~\ref{tab:cycle_periods} define the high-price selection length. Intervals are sorted by price and the highest $H$ hours, lowest $H$ hours, and remaining intervals are averaged to give $\overline{\lambda}_{\mathrm{top},H,W}$, $\overline{\lambda}_{\mathrm{bottom},H,W}$, and $\overline{\lambda}_{\mathrm{non\ top},H,W}$. The case-study price metrics are
\begin{align}
  V_{\mathrm{system}} &=
  \mathrm{mean}_{W}\left(\overline{\lambda}_{\mathrm{top},H,W}\right), \\
  \lambda_{\mathrm{charge}} &=
  \mathrm{mean}_{W}\left(\overline{\lambda}_{\mathrm{bottom},H,W}\right), \\
  \lambda_{\mathrm{run}} &=
  \mathrm{mean}_{W}\left(\overline{\lambda}_{\mathrm{non\ top},H,W}\right).
\end{align}
The first is the modeled system value during high-price hours. The second is used for battery charging and hydrogen production, and the third for industrial and data-center make-up operation. Make-up or charging electricity is counted on the cost side and is not deducted from system value. For the hourly, daily, weekly, and seasonal cases, the system values are 0.472, 0.489, 0.475, and 0.510~CNY/kWh.\footnote{Approximate mid-market rates as of early August~2026: 1~CNY~$\approx$~0.148~USD~$\approx$~0.130~EUR.}

\subsection{Technology and load cases}

Table~\ref{tab:case_resources} summarizes the resources compared below. Detailed investment, lifetime, efficiency, inventory, and source-mapping inputs are given in the Supplementary Data.

\begin{table}[H]
  \centering
  \caption{Case-study resources and central cost boundaries}
  \label{tab:case_resources}
  \footnotesize
  \begin{tabularx}{\textwidth}{>{\raggedright\arraybackslash}p{0.20\textwidth}>{\raggedright\arraybackslash}p{0.38\textwidth}>{\raggedright\arraybackslash}X}
    \toprule
    \textbf{Resource} & \textbf{Central boundary} & \textbf{Key notes} \\
    \midrule
    Battery & China-localized energy and inverter costs & Lifetime and efficiency from DEA~\cite{PyPSA,pypsa_costs,dea_energy_storage} \\
    Pumped hydro & China 4-h project investment & Long life; Zhen'an cross-check~\cite{nea_zhenan_2024} \\
    Vanadium flow & Chinese 100-MW/400-MWh project & Power/energy shares from PNNL~\cite{sasac_vrfb_2025,pnnl_esgc_2024} \\
    Hydrogen storage & Electrolysis, cavern storage, fuel cell & DEA technology data~\cite{dea_renewable_fuels,dea_tech_el_dh,dea_energy_storage} \\
    \midrule
    Aluminum & 13,300~kWh/t-Al; 18,000~CNY/(t-Al/yr); restart 110~CNY/kW/event & Smelter-only; full potline interruption~\cite{CHALCO2020AnnualReport,Hu2009,noauthor_high_2020,noauthor_alcoa_2021} \\
    Steel & 440~kWh/t; 20,280~CNY/(t/yr) whole plant & EAF-only retained as sensitivity~\cite{ft_stegra_boden,steelonthenet_eaf,bataille_review_2018} \\
    Data center & 77,040~CNY/kW shell/core (15~yr); 180,000~CNY/kW fit-out (5~yr) & Generation excluded~\cite{jll_data_center_outlook,irs_publication_946_2024} \\
    \bottomrule
  \end{tabularx}
\end{table}

All storage technologies use the same $T$, $H$, Guangdong charging prices, and discount rate. Load cases distinguish newly built capacity from existing capacity treated as sunk. Each resource is evaluated at its modeled flexible power and normalized by the energy shifted at that power, rather than by a technology-specific modulation depth. The tabulated values are case-study anchors, not sector-wide averages.

\subsection{Cross-timescale comparison}

\begin{figure}[!htbp]
  \centering
  \includegraphics[width=0.98\textwidth,height=0.60\textheight,keepaspectratio]{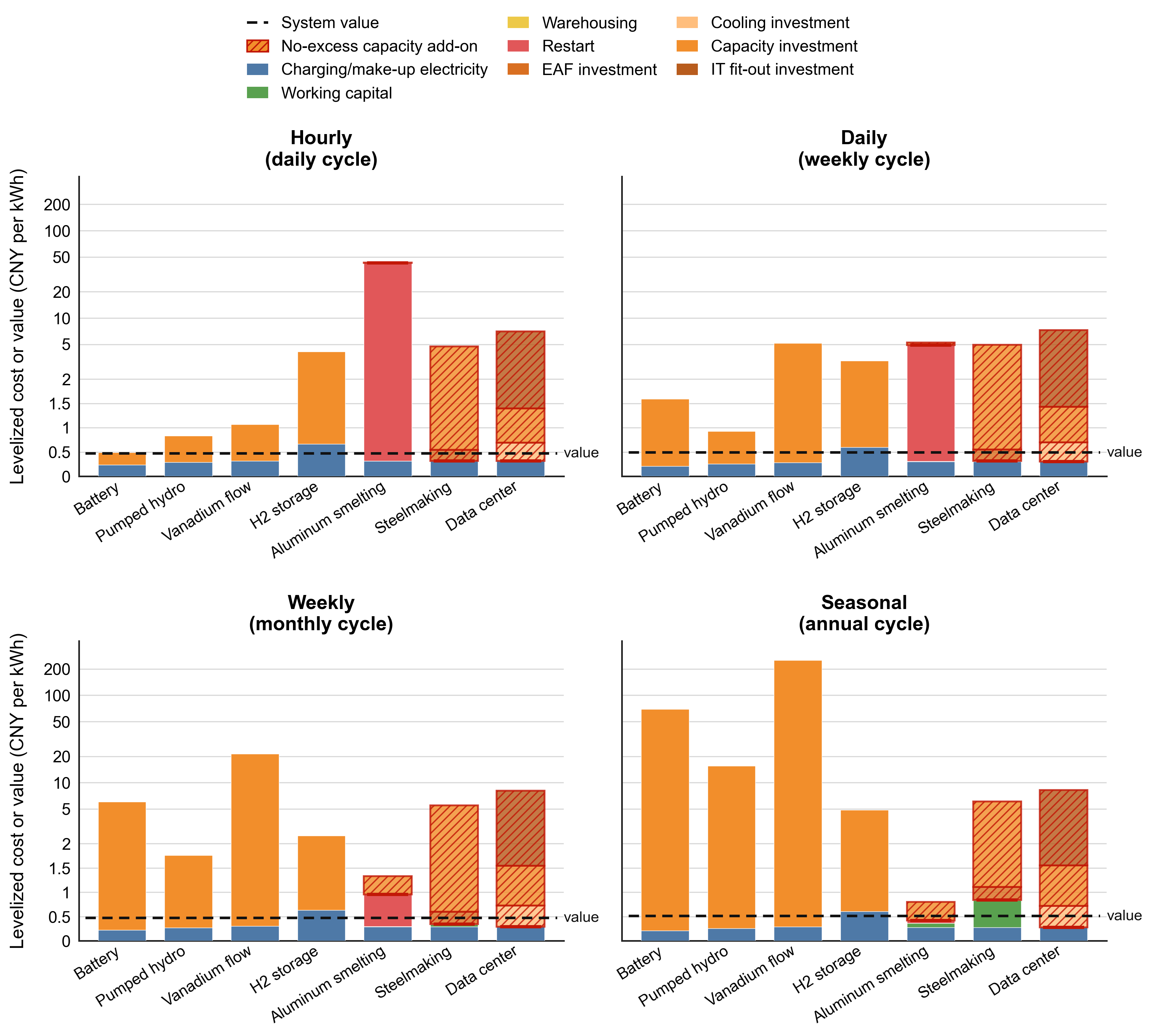}
  \caption{Levelized flexibility cost (LCPE) across four timescales for energy storage and large-load shifting. The dashed line is the Guangdong day-ahead high-price reference. Solid segments are operating, inventory, warehousing, restart, or storage capacity costs; red-hatched segments add new productive or computing capacity when no excess capacity exists. Battery and other storage costs rise with the flexibility period as annual cycling falls, whereas full-interruption aluminum falls sharply once restart cost is diluted, producing a nearly two-order-of-magnitude reversal relative to batteries between the hourly and seasonal cases. Whole-plant steel and data-center expansion remain expensive at all timescales because new capacity dominates. Values below the dashed line have modeled costs below the selected system-value reference before omitted implementation costs.}
  \label{fig:flex_costs}
\end{figure}

As shown in Fig.~\ref{fig:flex_costs} and Table~\ref{tab:compact_results}, energy-storage LCPE rises as the flexibility period lengthens and annual cycling falls, whereas load-side costs follow process-specific paths. Battery LCPE increases from 0.494~CNY/kWh in the hourly case to 69.5~CNY/kWh in the seasonal case. Full-interruption aluminum moves in the opposite direction, from 43.0 to 0.804~CNY/kWh when new smelting capacity is included, so aluminum is about 87 times more expensive than batteries at the hourly scale and about 86 times cheaper at the seasonal scale. Whole-plant steel and data-center expansion remain expensive throughout (4.75--6.09 and 7.02--8.21~CNY/kWh), though the capital cost of new productive or computing capacity does not increase with longer periods in the way battery cost does.

\begin{table}[!htbp]
  \centering
  \scriptsize
  \setlength{\tabcolsep}{4pt}
  \renewcommand{\arraystretch}{1.08}
  \caption{Compact summary of central hourly and seasonal results. Each result is reported as LCPE in CNY/kWh, with LCPE divided by the corresponding system value in parentheses.}
  \label{tab:compact_results}
  \begin{tabularx}{\textwidth}{>{\raggedright\arraybackslash}p{0.22\textwidth}>{\centering\arraybackslash}p{0.14\textwidth}>{\centering\arraybackslash}p{0.14\textwidth}>{\raggedright\arraybackslash}X}
    \toprule
    \textbf{Resource and boundary} & \textbf{Hourly LCPE (ratio)} & \textbf{Seasonal LCPE (ratio)} & \textbf{Dominant modeled driver or qualification} \\
    \midrule
    Battery & 0.494 (1.05) & 69.5 (136) & Capacity investment; annual utilization collapses at seasonal scale. \\
    Pumped hydro & 0.833 (1.76) & 15.5 (30.4) & Long life and lower energy-capacity cost moderate the investment term. \\
    Vanadium flow & 1.07 (2.26) & 252 (494) & Energy-capacity investment dominates under the Chinese project anchor. \\
    Hydrogen storage & 4.12 (8.71) & 4.85 (9.51) & Conversion and storage investment plus efficiency losses. \\
    \midrule
    Aluminum: new capacity & 43.0 (90.9) & 0.804 (1.58) & Restart cost dominates hourly response. \\
    Aluminum: sunk capacity & 42.6 (90.3) & 0.416 (0.82) & Removing investment matters primarily at seasonal scale. \\
    Steel: whole plant & 4.75 (10.1) & 6.09 (11.9) & Whole-plant expansion is the broad upper boundary. \\
    Steel: EAF only & 0.542 (1.15) & 1.11 (2.17) & Component-only expansion sharply reduces capacity cost. \\
    Steel: sunk capacity & 0.317 (0.67) & 0.841 (1.65) & Seasonal working capital remains material. \\
    Data center: new capacity & 7.02 (14.9) & 8.21 (16.1) & New IT fit-out dominates. \\
    Data center: sunk capacity & 0.316 (0.67) & 0.275 (0.54) & Lower bound excluding delay and SLA costs. \\
    \bottomrule
  \end{tabularx}
\end{table}

\begin{figure}[!htbp]
  \centering
  \includegraphics[width=0.98\textwidth,height=0.56\textheight,keepaspectratio]{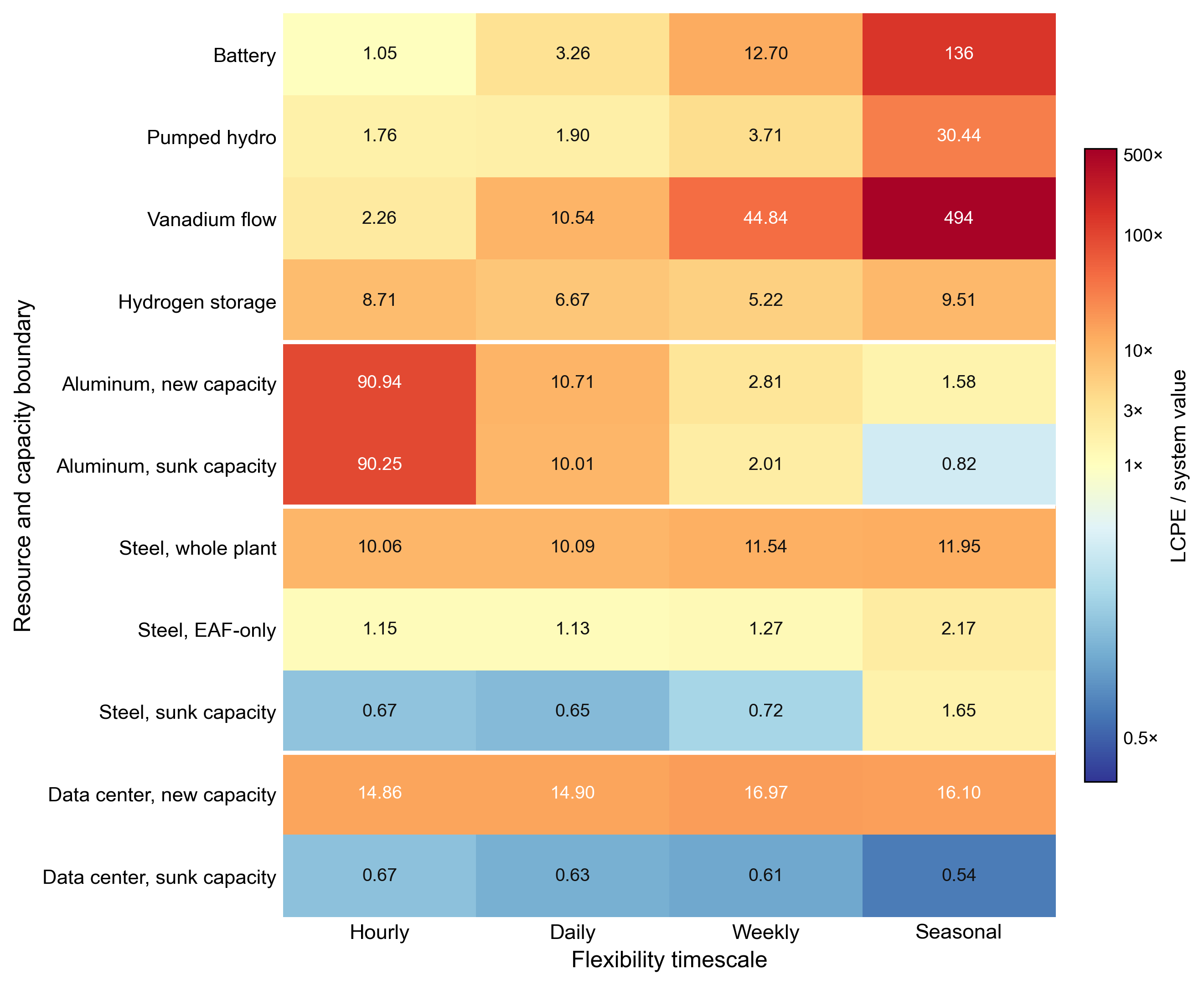}
  \caption{LCPE divided by the corresponding Guangdong Top-$H$ system value. Ratios below one mean that modeled resource cost is below the high-price reference before omitted enabling, transaction, and service-quality costs. No energy-storage technology falls below one in the central case; batteries come closest in the hourly case. Load-side ratios fall below one mainly when existing capacity is treated as sunk, and for seasonal aluminum once restart cost is diluted. White separators distinguish storage, aluminum, steel, and data-center groups.}
  \label{fig:cost_value_ratio}
\end{figure}

Fig.~\ref{fig:cost_value_ratio} compares these costs with the Guangdong high-price reference. No energy-storage technology falls below a ratio of one in the central case; batteries come closest in the hourly case (1.05). Load-side ratios fall below one mainly when existing capacity is treated as sunk, and for seasonal aluminum once restart cost is diluted. These ratios use gross system value and should not be read as private profitability. Because the plotted LCPE already omits enabling and transaction costs, a ratio above one means the resource-side lower bound alone exceeds the reference value, so there is no residual margin with which to cover those omitted items.

\subsection{New capacity versus sunk excess capacity}

\begin{figure}[!htbp]
  \centering
  \includegraphics[width=0.98\textwidth,height=0.56\textheight,keepaspectratio]{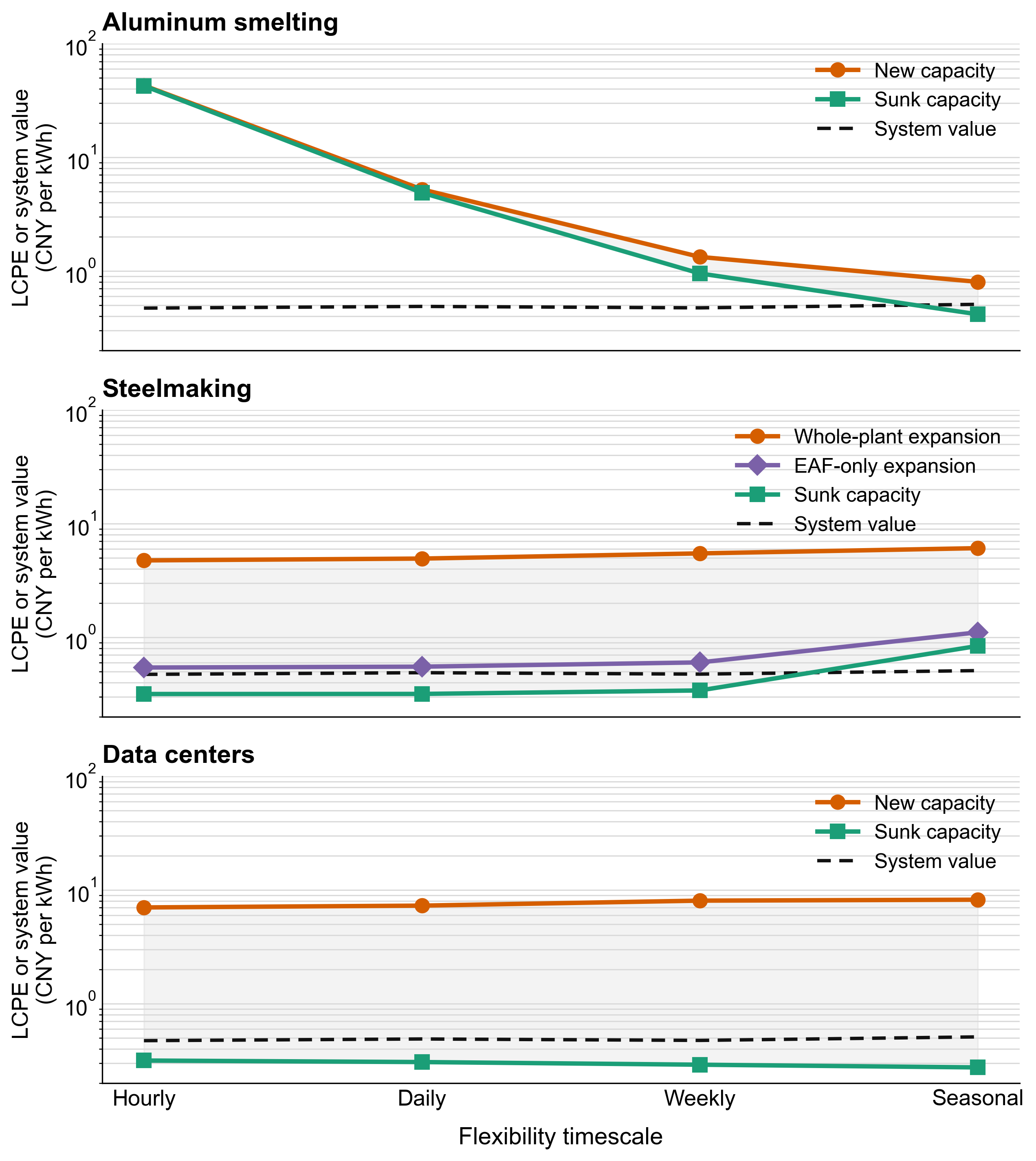}
  \caption{Demand-flexibility LCPE under newly added capacity, component-only expansion where available, and existing capacity treated as sunk. The dashed line is the Guangdong system value; gray shading spans the range across capacity boundaries within each sector. Spare existing capacity largely decides whether the load cases compared here are economic. Treating capacity as sunk, or expanding only a decoupled component such as the EAF, lowers steel and data-center LCPE by about an order of magnitude relative to whole-plant or shell/core-plus-fit-out expansion, and can bring several cases near or below the system-value line. At short duration aluminum is different. Restart cost dominates and removing capacity investment barely changes the hourly result, whereas sunk capacity matters more at seasonal scale. Data-center sunk-capacity values exclude workload-delay and service-quality costs.}
  \label{fig:capacity_boundaries}
\end{figure}

Whereas assessments of demand flexibility often assume that controllable capacity already exists, Fig.~\ref{fig:capacity_boundaries} shows that whether spare capacity is already available largely decides whether the load-side options compared here are economic. For steel, hourly LCPE falls from 4.75~CNY/kWh with whole-plant expansion to 0.542~CNY/kWh with EAF-only expansion and 0.317~CNY/kWh when existing EAF capacity is sunk. Data centers show a similar drop once shell/core and IT fit-out investment are removed, although the sunk result excludes delay and service-quality costs. With existing capacity treated as sunk, several load cases fall near or below the Guangdong system-value line; with new capacity required, they generally do not. Aluminum behaves differently at short duration. Restart cost dominates, so removing capacity investment changes hourly LCPE only from 43.0 to 42.6~CNY/kWh. At seasonal scale, sunk capacity lowers aluminum from 0.804 to 0.416~CNY/kWh and brings it below the modeled system-value line. Existing capacity can be treated as sunk only if its use for flexibility does not displace a profitable alternative use, accelerate replacement, or require additional supporting infrastructure. The comparison therefore does not imply that firms should build redundant production or computing capacity merely to obtain the lower sunk-capacity LCPE.

\subsection{Temporal scaling of cost components}

\begin{figure}[!htbp]
  \centering
  \includegraphics[width=0.98\textwidth,height=0.58\textheight,keepaspectratio]{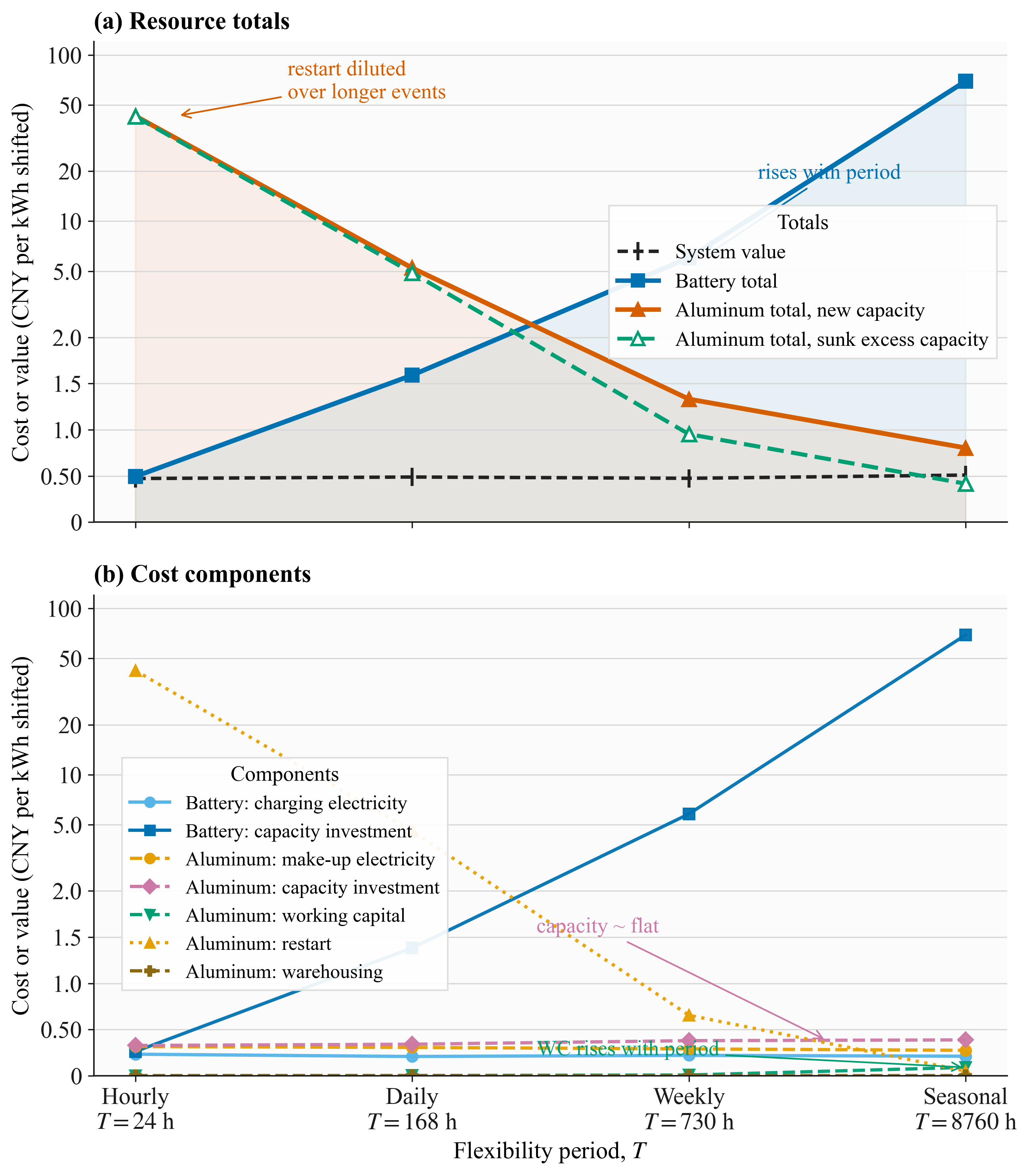}
  \caption{\textbf{(a)} Battery and aluminum-smelting LCPE versus flexibility period. Battery cost rises as annual cycle count falls, while aluminum totals fall as restart cost is diluted, producing the short- versus long-duration reversal in Fig.~\ref{fig:flex_costs}. \textbf{(b)} Cost components for aluminum. Under a fixed avoided-hour share, added-capacity cost is nearly independent of period length, working capital grows with the separation between production and sale, and event-based restart cost falls as the shifted duration within each event increases. Treating existing excess capacity as sunk removes the added-capacity term. Warehousing remains smaller than these drivers.}
  \label{fig:flex_scaling}
\end{figure}

As shown in Fig.~\ref{fig:flex_scaling}, the battery capacity term rises from 0.262~CNY/kWh in the 24-h case to 69.3~CNY/kWh when the same investment is recovered over a single annual cycle, while charging electricity changes little. Aluminum added-capacity cost stays nearly flat (0.327--0.388~CNY/kWh), consistent with the fixed avoided-hour-share derivation above. Working capital increases with the separation between production and sale, whereas restart cost falls from 42.3 to 0.051~CNY/kWh as the same event cost is spread over longer shifted durations. Short aluminum interruptions are therefore dominated by process disruption; seasonal aluminum shifting is divided mainly among make-up electricity, capacity, and inventory financing. Warehousing remains smaller than these terms. Longer duration does not automatically favor every load. Sunk-capacity steel still rises with working capital (0.317 to 0.841~CNY/kWh), and the apparently low data-center sunk-capacity path excludes delay and service-quality costs that would matter for seasonal computing delay.

\subsection{Limitations}\label{sec:limitations}

Communication, metering, market-access, participation, and retrofit costs are not assigned common values, nor are steel restart costs or data-center delay, SLA, and application-level network costs. Results for the affected cases are therefore lower bounds with respect to those categories: they mark a theoretical resource-side floor, not a delivered program cost. Where that floor already lies above the system-value reference, reducing enabling or transaction frictions cannot by itself make the option economic; the capacity, inventory, or disruption terms must first be brought down. The aluminum case models a full potline interruption rather than routine 5--10\% modulation that can remain continuous~\cite{taylor_low_amperage_2015,wong_studies_2023}.

Guangdong day-ahead prices are used because they align with the Chinese technology-cost inputs. In many other power markets, high--low price spreads are larger than in this sample, so which resource looks economic also depends on market design and local prices, not only on technology cost. The present study focuses on the technology side and therefore keeps system value as a simple high-price reference rather than a full market valuation. The same LCPE framework can be reapplied with other regional price series; only the value and charging or make-up price inputs need to change.

The numerical cases use representative central technology costs to show orders of magnitude and timescale dependence, not a universal crossover. Near the cost--value boundary, rankings remain project-specific. The battery--aluminum reversal, however, exceeds a factor of 80 in opposite directions at the short and seasonal scales; overturning that particular technology-side comparison would require correspondingly large changes in the input assumptions.

\section{Conclusion}\label{sec:conclusion}

Demand response is often treated as if it had no capital cost. Controllable load is assumed already available, so helping the power system balance supply and demand is taken to require only enabling measures or participation payments. From a life-cycle perspective that view is incomplete. Flexible operation means consuming less electricity in some hours and more in others while keeping the same product or service output. The make-up consumption requires additional productive or computing equipment whenever baseline assets already run near full utilization. The capital cost of that lower asset utilization, or of the spare capacity needed to restore it, is an implicit cost of demand flexibility and should be counted alongside make-up electricity, inventory financing, and process disruption.

In this article we formulate a levelized flexibility cost for large loads that can be compared with energy storage on a common per-shifted-kilowatt-hour basis. Storage and load-shifting costs are found to change in opposite directions with the flexibility period, so their relative economics reverse across timescales. Energy-storage capacity is used less often as the period lengthens, raising its annualized cost per unit of shifted energy. Under a fixed avoided-hour share, the added-capacity cost of load shifting is approximately independent of the period, whereas working capital increases with the time between production and sale, and an event-based disruption cost is spread over more shifted hours.

Comparing representative technology cases with electricity market prices, whether capacity investment is still required often decides whether demand flexibility is economic. Spare existing capacity can bring steel, data-center, and seasonal aluminum costs near or below the modeled system-value line; new whole-plant or shell/core-plus-fit-out capacity generally cannot. The capacity state matters less when process disruption dominates, as in the hourly full-interruption aluminum case. Where the modeled LCPE already exceeds system value, adding the enabling and transaction costs emphasized in much of the demand-response literature would only widen that gap.

Whether demand flexibility is still needed as batteries become cheaper has no simple answer. Neither resource is universally cheaper. Batteries beat demand flexibility at hourly scale when new production or computing capacity would otherwise be required, while demand flexibility can reverse that ranking at seasonal scale. In the battery--aluminum comparison with new smelting capacity included, aluminum LCPE is about 87 times battery LCPE in the hourly case, whereas battery LCPE is about 86 times aluminum LCPE in the seasonal case. Local electricity market spreads also affect which option clears a cost--value test, but the same levelized cost framework can be reapplied with other price series. The comparison therefore depends on timescale, asset state, and the full consumer-side cost of shifting.

\bibliographystyle{elsarticle-num}
\bibliography{reference}

\end{document}